\documentclass[lettersize,journal]{IEEEtran} 

\usepackage{amssymb}
\usepackage{amsmath}
\usepackage{amsmath,bm}
\usepackage{amsthm}
\usepackage{graphicx}
\usepackage{cuted}
\usepackage{subfigure}
\usepackage{multirow}
\usepackage{makecell} 
\usepackage{cite}
\usepackage{enumerate}
\usepackage{enumitem}
\usepackage{subfigure}
\usepackage{stfloats}
\usepackage{algorithm}
\usepackage{algorithmic}  
\usepackage{color, soul}
\usepackage{hyperref}
\usepackage[labelsep=period]{caption}


\allowdisplaybreaks[4]

\DeclareMathOperator{\diag}{diag}

\begin{document}
	\setlength{\abovedisplayskip}{4pt}
	\setlength{\belowdisplayskip}{4pt}
	\newtheorem{theorem}{\bf~~Theorem}
	\newtheorem{remark}{\bf~~Remark}
	\newtheorem{property}{\bf~~Property}
	\newtheorem{observation}{\bf~~Observation}
	\newtheorem{definition}{\bf~~Definition}
	\newtheorem{lemma}{\bf~~Lemma}
	\newtheorem{preliminary}{\bf~~Preliminary}
	\newtheorem{proposition}{\bf~~Proposition}
	\newtheorem{comment}{\bf~~Comment}
	\renewcommand\arraystretch{0.9}
	\title{\LARGE{Holographic Integrated Sensing and Communication With Limited Radiation Amplitudes: How Many Quantization Bits Are Enough?}}
	\author{\normalsize \IEEEauthorblockN{
			{Shuhao Zeng}, \IEEEmembership{\normalsize  Member, IEEE}}

		\thanks{Shuhao Zeng is with Department of Electrical and Computer Engineering, Princeton University, NJ, USA (email: shuhao.zeng96@gmail.com).}	
	}
		\IEEEaftertitletext{\vspace{-4ex}}
	\maketitle
\begin{abstract}
	As a promising solution for extremely large-scale arrays, reconfigurable holographic surfaces~(RHS) can be integrated with integrated sensing and communication (ISAC) to form the holographic ISAC paradigm, where significant beamforming gains provided by RHS can improve both communication and sensing performance. However, most existing works on holographic ISAC assume that RHS elements can control the amplitudes of radiated ISAC signals in a continuous manner, which is hard to implement in practice. In this paper, we investigate how the limited radiation amplitudes of the RHS influence ISAC system performance. Specifically, we first derive closed-form lower bounds for both communication rate and sensing SINR to establish a tight upper bound on the minimum radiation amplitude quantization bits. Based on this, we further explore how the communication-sensing performance tradeoff impacts the quantization bits. Numerical results validate our theoretical analysis.
\end{abstract}

	\begin{IEEEkeywords}
		Holographic ISAC, reconfigurable holographic surfaces, limited radiation amplitudes
	\end{IEEEkeywords}
\vspace{-.3cm}
\section{Introduction}
Integrated sensing and communications~(ISAC) is widely regarded as a key enabling technology for future sixth-generation~(6G) networks~\cite{Yang_6G_2019}. In contrast to conventional radar and communication systems that operate independently over separate frequency bands, ISAC integrates sensing and communication functionalities over shared spectrum and hardware platforms, thereby significantly improving spectral, energy, and cost efficiency~\cite{Wang_Gen_2024}.

To further enhance ISAC performance, recent studies have investigated the use of reconfigurable holographic surfaces~(RHS), leading to an emerging paradigm of holographic ISAC~\cite{HISAC_Zhang_2022}. An RHS is a representative metasurface architecture composed of densely packed subwavelength elements embedded in a waveguide. By tuning the bias voltages of onboard diodes, each RHS element adjusts the amplitude of the radiated wave to enable holographic beamforming. Compared with conventional phased arrays that require complex and costly hardware, RHS supports low-cost and energy-efficient beamforming based on simple diodes, making it particularly attractive for large-scale deployments. The resulting large radiation aperture of the RHS enables high directive gain, thereby enhancing ISAC performance~\cite{Hu_how_many_2022}.

Existing works on holographic ISAC mainly focus on transceiver design and holographic beamforming~\cite{Zeng_MC_2025,HISAC_trx_architecture,covert_gao_2025}. For example, in~\cite{Zeng_MC_2025}, to prevent sidelobes induced by mutual coupling within the RHS from interfering with sensing process, a mutual-coupling-aware beamforming algorithm is proposed to jointly optimize the RHS configuration and digital beamformer to suppress sidelobes. In~\cite{HISAC_trx_architecture}, the authors develop a transceiver co-design framework for holographic ISAC systems, where digital beamformer, transmit/receive RHS configurations, and receive filters are jointly optimized to improve ISAC performance. However, most works assume that the RHS elements can control the radiation amplitude of the transmit ISAC signals continuously, which is hard to implement in practice~\cite{Hu_how_many_2022,Peng_Satellite_2025}. Although some initial works have considered limited radiation amplitudes of the RHS for holographic ISAC systems, such as~\cite{HISAC_Zhang_2022}, these works only consider beamforming design given the radiation amplitude quantization bits, while how the quantization bits influence holographic ISAC performance remains unexplored.


In this paper, we investigate a holographic ISAC system, where the base station~(BS) utilizes an RHS to serve multiple users and a sensing target via discrete amplitude-controlled holographic beamforming. We aim to determine the minimum number of quantization bits required for the system performance, defined as a weighted sum of the communication rate and sensing signal-to-interference-plus-noise ratio~(SINR), to satisfy a prescribed threshold. This is challenging since unlike conventional RHS-aided communication systems, the minimum bit requirement is jointly determined by both communication and sensing performance under limited radiation amplitudes. To address this, we derive closed-form lower bounds for both communication rate and sensing SINR to establish a tight upper bound on the required bits. On this basis, we further explore how the communication-sensing performance tradeoff impacts the minimum bit requirement. Numerical results validate our theoretical analysis.
	\vspace{-0.2cm}
	\section{System Model}
	\label{sec_system_model}
	\subsection{Scenario Description}	
	As shown in Fig.~\ref{sysmodel}, we consider a holographic ISAC system that consists of $L$ mobile users denoted by $\mathbb{L}= \{1,\cdots,l,\cdots,L\}$, one target, and a BS equipped with an RHS and a multiple-input-multiple-output~(MIMO) antenna array. The RHS transmits the ISAC signals with multiple beams directed toward the users and targets. The communication users receive and decode the transmitted signals to extract communication information~\cite{Zeng_MC_2025}, while the MIMO antenna array captures the echo signals reflected by the targets for radar sensing~\cite{Qin_CamEdit_2025}. Further, we assume there are $W$ clutters in the environment, where leakage power from the sidelobes of the RHS can propagate to various clutters and the resulting echo signals can cause interference to sensing procedures. 
	
	\begin{figure}[!t]
		\centering
		\includegraphics[width=0.35\textwidth]{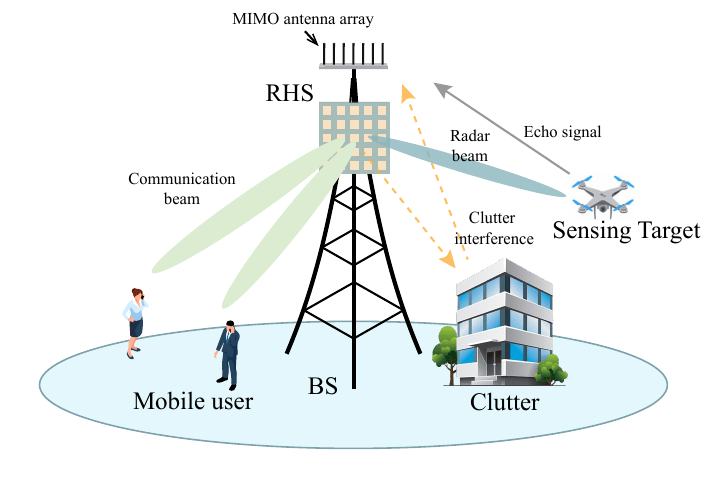}
		\caption{System model of an RHS-aided ISAC network.}
		\vspace{-4mm}
		\label{sysmodel}
	\end{figure}
\vspace{-.2cm}
\subsection{Holographic Beamforming Scheme}
The holographic beamforming scheme consists of two stages: digital beamforming at the BS and analog beamforming at the RHS. Specifically, $L$ communication data streams and $K$ radar waveforms are first precoded by the digital beamformers $\bm{V}_c \in \mathbb{C}^{T \times L}$ and $\bm{V}_s \in \mathbb{C}^{T \times K}$, respectively. The precoded baseband signals are then upconverted through $S$ RF chains, each of which is connected to one feed of the RHS. The signals injected into each feed propagate along the RHS as reference waves and sequentially excites one row of $M_{sub}$ radiation elements. By adjusting the biased voltages applied to onboard diodes, the radiation amplitudes $\{a_m\}$ of the RHS elements can be controlled to generate directional beams, thereby realizing RHS-based analog beamforming. Therefore, the signals transmitted by the RHS can be expressed as
\begin{equation}
	\bm{x}=\bm{A}\bm{F}(\bm{V}_c \bm{c}+\bm{V}_s \bm{s}),
\end{equation}
where $\bm{c}\in\mathbb{C}^{L\times1}$ and $\bm{s}\in\mathbb{C}^{K\times1}$ denote the communication symbols and radar waveforms, respectively, $\bm{A}=\diag\{a_1,\dots,a_M\}\in\mathbb{C}^{M\times M}$ collects the radiation amplitudes of all RHS elements, also referred to as holographic pattern, and $\bm{F}\in\mathbb{C}^{M\times S}$ captures the feed-to-element propagation. Here, we have $\bm{F}=\diag\{\bm{f}_1,\dots,\bm{f}_S\}$ with $\bm{f}_s=[f_{s,1},\dots,f_{s,M_{sub}}]^{T}$, where $f_{s,m}$ characterizes the phase shift and amplitude change introduced by the propagations from the $s$-th feed to the $m$-th RHS element in the $s$-th row.

In practice, the radiation amplitudes of RHS elements are limited and discrete rather than continuously tunable~\cite{Hu_how_many_2022}. By assuming that each element adopts $Q$-bit uniform quantization, the quantized radiation amplitude $\hat{a}_m$ of the $m$-th RHS element with limited amplitude values satisfies
\begin{align}
	\label{quant_level}
	\hat{a}_m \in \mathbb{A}_Q
	= \left\{ \frac{1}{2^{Q+1}}, \frac{3}{2^{Q+1}}, \ldots, \frac{2^{Q+1}-1}{2^{Q+1}} \right\}, \quad \forall n.
\end{align}
\vspace{-.5cm}
\subsection{Data Rate Model for Communication}
Let $\bm{h}_l = (h_{1,l}, \dots, h_{M,l})^T$ denote the channel from the RHS to the $l$-th mobile user. Define $\bm{V}$ as the combination of the digital beamformer $\bm{V}_c$ for communication signals and $\bm{V}_s$ for sensing signals, i.e., $\bm{V}=[\bm{V}_c,\bm{V}_s]\in \mathbb{C}^{T\times (L+K)}$. Then, the receive SINR of the $l$-th user can be expressed as~\cite{Zeng_MC_2025}		
\begin{align}
	\label{com_SINR}
	\gamma^{(C)}_l = \frac{|\bm{h}_l^{\mathsf{T}}\bm{A}\bm{F}\bm{v}_{l}|^2}{\sum\limits_{l'\le L+K,l'\neq l}|\bm{h}_l^{\mathsf{T}}\bm{A}\bm{F}\bm{v}_{l'}|^2+\sigma^2 }.
\end{align}
Here, $\bm{v}_l$ is the $l$-th column of the digital beamformer $\bm{V}$, $\sigma^2$ is receive noise power, and the term $|\bm{h}_l^{\mathsf{T}}\bm{A}\bm{F}\bm{v}_{l'}|^2$ represents inter-user interference and interference caused by the sensing signals when $l'\le L$ and $l'> L$, respectively.

Based on (\ref{com_SINR}), we derive the data rate for the $l$-th user, i.e.,
\begin{align}
	\label{data_rate}
	R_l=\log_2(1+\gamma^{(C)}_l).
\end{align}
The sum rate $R^{sum}$ over $L$ communication users, i.e., $R^{sum}=\sum_lR_l$, is utilized to characterize the communication performance of the holographic ISAC systems.
\vspace{-.3cm}
\subsection{SINR Model for Radar Sensing}
The ISAC waveforms radiated by the RHS are primarily directed toward the sensing target, where the target returns are received by the MIMO antenna array at the BS for sensing. However, due to sidelobes, a portion of the transmitted energy may also leak toward $W$ environmental clutters. The received echo signals from the clutters act as interference to the sensing process. To suppress noise and the interference caused by the clutters, the BS will apply a filter $\bm{o}$ to the echo signals. Therefore, the resulting received SINR is given by~\cite{Rang_waveform_2022,Quan_Siamese_2024}
\begin{align}
	\label{sensing_model}
	\gamma^{(S)}=
	\frac{\bm{o}^{H}\bm{H}_T\bm{A}\bm{G}\bm{A}^{H}\bm{H}_T^{H}\bm{o}}
	{\bm{o}^{H}\Big(\bm{H}_C\bm{A}\bm{G}\bm{A}^{H}\bm{H}_C^{H}
		+\sigma^{2}\bm{I}_N\Big)\bm{o}},	
\end{align}
where $\bm{H}_T\in \mathbb{C}^{N\times M}$ represents the target return channel, i.e., the channel from the $M$ RHS elements to the sensing target and back to the $N$ MIMO antennas at the BS. The matrix $\bm{H}_C\triangleq \sum\nolimits_w \bm{H}_w \in \mathbb{C}^{N\times M}$ denotes the aggregate clutter return channel, where $\bm{H}_w$ corresponds to the $w$-th clutter component. Moreover, $\bm{G}=\bm{F}\bm{V}\bm{V}^H\bm{F}^H$ characterizes the equivalent transmit covariance after digital precoding and feed-to-RHS propagation, and $\bm{I}_N$ is the identity matrix of dimension $N$. Here, we utilize the received sensing SINR in (\ref{sensing_model}) to characterize the radar sensing performance~\cite{Rang_waveform_2022}.



Based on the communication sum rate in (\ref{data_rate}) and sensing SINR in (\ref{sensing_model}), the overall ISAC performance is defined as~\cite{Wang_NOMA_2022}
\begin{align}
	\label{overall_performance}
	\mathcal{U} &= \frac{\rho}{R^{sum}_{\max}} \sum\nolimits_lR_l + \frac{1-\rho}{\gamma^{(S)}_{\max}} \gamma^{(S)}\\
	&\triangleq \rho^{(C)} \sum\nolimits_lR_l+\rho^{(S)}\gamma^{(S)},\notag
\end{align}
where $\rho \in [0,1]$ controls the tradeoff between communication and sensing, $R^{sum}_{\max}$ is the maximum sum rate, i.e., the sum rate of a communication-centric holographic ISAC system, and $\gamma^{(S)}_{\max}$ is the maximum sensing SINR, i.e., the sensing SINR of a sensing-centric holographic ISAC system. Here, $R^{sum}_{\max}$ and $\gamma^{(S)}_{\max}$ are utilized to normalize communication and sensing performance, respectively, such that the impact of quantization on both functionalities can be fairly compared.



\vspace{-.2cm}	
\section{Derivation of Required Radiation Amplitude Quantization Bits}
Since the number of radiation amplitude values of the RHS is finite in practice, the performance degradation is unavoidable. In this section, we analyze the effects of quantization bits on the performance of the considered holographic ISAC system, where a closed-form expression for the minimum number of quantization bits is derived.
\vspace{-.2cm}
\subsection{Quantization Bits Minimization Problem Formulation}
Denote by $\{a_m^*\}$ the optimal RHS configuration that maximizes the overall performance $\mathcal{U}$ in (\ref{overall_performance}) under ideal continuously tunable radiation amplitudes. Correspondingly, the optimal holographic beamforming matrix is $\bm{A}^*=\diag\{a_1^*,\dots,a_M^*\}\in\mathbb{C}^{M\times M}$. By substituting $\bm{A}^*$ into (\ref{overall_performance}), the resulting optimal performance of the holographic ISAC system is denoted by $\mathcal{U}^*$.

In practice, however, only a finite set of radiation amplitude levels is available, as indicated in (\ref{quant_level}). Therefore, each optimal amplitude $a_m^*$ is approximated by its nearest quantized value, denoted by $\hat{a}_m$, and the corresponding RHS matrix becomes $\hat{\bm{A}}=\diag\{\hat{a}_1,\dots,\hat{a}_M\}$. $Q$-bit quantization introduces a deviation from the optimal continuous solution, given by
\begin{align}
	\label{quantization_error_range}
	|a_m^*-\hat{a}_m| \le \frac{1}{2^{Q+1}}, \quad \forall n.
\end{align}

By substituting the quantized radiation amplitudes $\hat{\bm{A}}$ into (\ref{overall_performance}), the resulting ISAC performance is denoted by $\hat{\mathcal{U}}$, satisfying
\begin{align}
	\hat{\mathcal{U}} = \rho^{(C)} \sum\nolimits_l \hat{R}_l + \rho^{(S)} \hat{\gamma}^{(S)}.
	\label{performace_quantized}
\end{align}
Here, $\hat{R}_l$ and $\hat{\gamma}^{(S)}$ denote the communication rate of user $l$ and the sensing SINR, respectively, both evaluated with the quantized RHS configuration $\hat{\bm{A}}$.

According to (\ref{quantization_error_range}), quantization causes the radiation amplitudes $\hat{\bm{A}}$ of the RHS to deviate from their optimal values $\bm{A}^*$, thereby degrading system performance. To quantify ISAC performance degradation, we define the error $\epsilon$ as the ratio of the system performance with quantized amplitudes and that with continuous ones. 
\begin{align}
	\label{target_ratio}
	\epsilon=\frac{\hat{\mathcal{U}}}{\mathcal{U}^*}.
\end{align}

To ensure acceptable performance, we aim to find the minimum quantization bit $Q$ such that the ratio $\epsilon$ in (\ref{target_ratio}) exceeds a threshold $\epsilon_0$, which can be mathematically formulated as
\begin{equation}
	\label{problem_formulation}
	\min_{Q \in \mathbb{N}} Q \quad s.t. \epsilon \geq \epsilon_0.
\end{equation}
\vspace{-.5cm}
\subsection{Problem Solving Process}
It is non-trivial to determine the minimum number of required quantization bits, since it is difficult to obtain a closed-form relationship between the degraded system performance $\hat{\mathcal{U}}$ and quantization bits $Q$. To address this, we will derive tractable lower bounds for the quantized communication sum rate $\sum_l \hat{R}_l$ and the quantized sensing SINR $\hat{\gamma}^{(S)}$. 


Specifically, define 
$b_m^{l,l'} = h_{l,m}\bm{f}_m\bm{v}_{l'}$, where $h_{l,m}$ is the channel from the $m$-th RHS element to communication user $l$, and $\bm{f}_m$ records the reference wave propagation effect from different feeds to the $m$-th RHS element, and $\bm{v}_{l'}$ is the $l'$-th column of digital beamformer $\bm{V}$. Further define 
$c_{m,m'}^{l,l'} = \Re\!(b_m^{l,l'}(b_{m'}^{l,l'})^\dagger)$. Then, we have the following theorem.

\begin{theorem}
	\label{the_lower_bound_quant_rate}
	The sum rate $\sum_l\hat{R}_{l}$ under limited radiation amplitudes can be lower bounded by
	\begin{align}
		\label{lower_bound_quant_rate}
		&\sum\nolimits_l \hat{R}_{l}\ge \notag\\
		&\sum_l\! \log_2\!\Big(1 \!+ \!\frac{P_l^{(C)}}{I_l^{(C)} \!+\! \sigma^2 \!+\!
			\big(\sum\limits_{l'\le L+K, l'\neq l}
			\sum\limits_{m} \frac{c_{m,m}^{l,l'}}{12}\big)2^{-2Q}}\Big),
	\end{align}
	where $P_l^{(C)} \triangleq \big|\bm{h}_l^{\mathsf T}\bm{A}^*\bm{F}\bm{v}_l\big|^2$ and $I_l^{(C)} \triangleq \sum_{\substack{l'\le L+K \\ l'\neq l}} \big|\bm{h}_l^{\mathsf T}\bm{A}^*\bm{F}\bm{v}_{l'}\big|^2$ denote the desired communication signal power and the aggregated multi-user interference power at user $l$ under continuous RHS amplitudes, respectively. The additional term $\Big(\sum_{l'\neq l}\sum_m \frac{1}{12} c_{m,m}^{l,l'}\Big)2^{-2Q}$ captures the additional interference caused by amplitude quantization, while $\sigma^2$ denotes the noise power.
\end{theorem}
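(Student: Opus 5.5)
The plan is to treat the quantization error as an additive perturbation and to model it statistically, as is standard in quantization-bit analyses of RHS (cf.~\cite{Hu_how_many_2022}). Write $\hat a_m = a_m^* + e_m$. Under uniform $Q$-bit quantization as in (\ref{quantization_error_range}), we model the $e_m$ as independent, zero-mean random variables uniformly distributed on $[-2^{-(Q+1)},2^{-(Q+1)}]$. Each then has variance $\frac{(2\cdot 2^{-(Q+1)})^2}{12}=\frac{2^{-2Q}}{12}$, which explains the factor $\frac{1}{12}2^{-2Q}$ in (\ref{lower_bound_quant_rate}).

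With this decomposition, the effective channel gain for stream $l'$ at user $l$ splits as
\[
\bm h_l^{\mathsf T}\hat{\bm A}\bm F\bm v_{l'}=\sum_m a_m^* b_m^{l,l'}+\sum_m e_m b_m^{l,l'}.
\]
Taking expectation over the errors, the cross terms vanish because $\mathbb{E}[e_m]=0$. Independence then kills the off-diagonal terms $m\neq m'$ of the error part. What remains is
\[
\mathbb{E}\big|\bm h_l^{\mathsf T}\hat{\bm A}\bm F\bm v_{l'}\big|^2=\big|\bm h_l^{\mathsf T}\bm A^*\bm F\bm v_{l'}\big|^2+\sum_m \frac{2^{-2Q}}{12}\,c_{m,m}^{l,l'},
\]
using $c_{m,m}^{l,l'}=|b_m^{l,l'}|^2$. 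Applying this to each interfering stream $l'\neq l$ and summing gives exactly $I_l^{(C)}$ plus the additional term in the denominator of (\ref{lower_bound_quant_rate}).

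For the desired signal we need a lower bound. Its mean is $P_l^{(C)}+\frac{2^{-2Q}}{12}\sum_m c^{l,l}_{m,m}\ge P_l^{(C)}$, so dropping the nonnegative extra term can only decrease the SINR. To pass from expected powers to the rate, I would use the convexity of $x\mapsto\log_2(1+a/x)$ in the interference-plus-noise power $x$. By Jensen, $\mathbb{E}[\log_2(1+P/X)]\ge\log_2(1+P/\mathbb{E}X)$ when the numerator is treated as fixed at $P_l^{(C)}$. Alternatively, one can adopt the usual engineering convention of treating the quantization distortion as extra Gaussian-like interference with the above power, which yields the same bound. Summing over $l$ gives the theorem.

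The main obstacle is rigor in this last step. The numerator and denominator are correlated random variables, and the bound is really an average (or a treating-distortion-as-noise) statement rather than a deterministic one for every realization. I would handle the numerator by lower-bounding the desired power by $P_l^{(C)}$, either in expectation or by conditioning on the projection of the errors. I would then apply Jensen only to the denominator. If that dependence proves awkward, I would fall back on the worst-case-free statistical model, stating the bound as holding for the expected quantized rate.
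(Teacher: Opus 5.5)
Your core computation is the paper's own argument from Appendix~A. You write $\hat a_m=a_m^*+e_m$ and let the first-order terms vanish in expectation. You drop the nonnegative excess $\frac{2^{-2Q}}{12}\sum_m c^{l,l}_{m,m}$ from the mean desired power, and you keep the second-order interference terms as the extra $\frac{2^{-2Q}}{12}\sum_{l'\neq l}\sum_m c^{l,l'}_{m,m}$. You also make explicit two assumptions the paper uses without stating them:
\begin{itemize}
\item the errors must be uncorrelated (e.g.\ independent), since the bound contains only diagonal terms $c^{l,l'}_{m,m}$, so the $m\neq m'$ terms must vanish;
\item the errors must be uniform on $[-2^{-(Q+1)},2^{-(Q+1)}]$, since that is where the factor $\frac{1}{12}2^{-2Q}$ comes from.
\end{itemize}

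The step that does not work is the one you add beyond the paper: passing to the expected rate. The expected-power computation establishes, and the paper's appendix establishes no more than,
\[
\frac{\mathbb{E}[S_l]}{\mathbb{E}[X_l]+\sigma^2}\ \ge\ \frac{P_l^{(C)}}{I_l^{(C)}+\sigma^2+\frac{2^{-2Q}}{12}\sum_{l'\neq l}\sum_m c_{m,m}^{l,l'}},
\]
where $S_l$ and $X_l$ are the quantized desired and interference powers. So (\ref{lower_bound_quant_rate}) holds only in this SINR-of-mean-powers sense.

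Your Jensen step on the denominator is valid for a fixed numerator, but the numerator is not fixed. First, $S_l\ge P_l^{(C)}$ fails pathwise: with $\alpha_l=\bm h_l^{\mathsf T}\bm A^*\bm F\bm v_l$, the cross term $2\Re\big(\alpha_l^{\dagger}\sum_m e_m b_m^{l,l}\big)$ can be negative. Second, you cannot replace $S_l$ by its mean, because $\log_2(1+S/x)$ is concave in $S$, so Jensen runs the wrong way there.

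No conditioning trick can repair this, because the expected-rate version is false for general amplitudes, and neither your argument nor the paper's uses optimality of $\bm A^*$. Take $L=1$, $b_m^{1,l'}=0$ for all $l'\neq 1$, and $b_m^{1,1}=b>0$ for all $m$.
\begin{itemize}
\item The interference vanishes identically, and the claimed bound reduces to $\log_2(1+P_1^{(C)}/\sigma^2)$.
\item The quantized rate is $\hat R_1=\log_2\big(1+b^2(\sum_m a_m^*+\sum_m e_m)^2/\sigma^2\big)$.
\item Once $b\big(\sum_m a_m^*-M2^{-(Q+1)}\big)>\sigma$, this is strictly concave in $\sum_m e_m$ over its whole support.
\item Strict Jensen then gives $\mathbb{E}[\hat R_1]<\log_2(1+P_1^{(C)}/\sigma^2)$, violating the claimed bound.
\end{itemize}
So the fallback of stating the theorem for the expected quantized rate is not available; the result must be read as the average-power statement above.

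A smaller point: if the treat-distortion-as-noise convention is applied to all distortion, the self-term $\frac{2^{-2Q}}{12}\sum_m c^{l,l}_{m,m}$ also lands in the denominator. That gives a smaller expression than (\ref{lower_bound_quant_rate}). The paper's exact form comes from counting that term in the mean signal power and then discarding it.
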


\begin{proof}
	See Appendix~\ref{app_lower_bound_quant_rate}.
\end{proof}

\vspace{-.4cm}
Similarly, we can derive a lower bound for the sensing SINR, as discussed below. Define $\bm{q}_m$ as the $m$-th row of $\bm{F}\bm{V}$, $\bm{d}_m^{(0)}=\bm{o}^{H}\bm{h}_m^{(0)}\bm{q}_m$, and $\bm{d}_m=\bm{o}^{H}\bm{h}_m\bm{q}_m$, where $\bm{h}_m^{(0)}$ and $\bm{h}_m$ are the $m$-th columns of the target channel matrix $\bm{H}_0$ and the clutter channel matrix $\sum_w\bm{H}_w$, respectively. Further define $e_{m,m'}=\Re(\bm{d}_m\bm{d}_{m'}^H)$. Then, we have


\begin{theorem}
	\label{the_lower_bound_sensing_SINR}
	The sensing SINR under limited radiation amplitudes can be lower bounded by
	\begin{align}
		\label{lower_bound_sensing_SINR}
		\hat{\gamma}_{S}\ge 
		\frac{P^{(S)}}{I^{(S)} +\sigma^2\bm{o}^H\bm{o}+ \left(\sum_{m}\frac{1}{12}e_{m,m}\right)2^{-2Q}},
	\end{align}
	where 
	$P^{(S)} \triangleq \bm{o}^{H}\bm{H}_0\bm{A}^*\bm{F}\bm{V}\bm{V}^H\bm{F}^H(\bm{A}^*)^H\bm{H}_0^H\bm{o}$ 
	represents the desired sensing signal power corresponding to the target echo under continuous RHS amplitudes, and 
	$I^{(S)} \triangleq \bm{o}^{H}\left(\sum_w\bm{H}_w\right)\bm{A}^*\bm{F}\bm{V}\bm{V}^H\bm{F}^H(\bm{A}^*)^H\left(\sum_w\bm{H}_w\right)^H\bm{o}$ 
	denotes the aggregated clutter-induced interference power under continuous RHS amplitudes. The additional term $\left(\sum_{m}\frac{1}{12}e_{m,m}\right)2^{-2Q}$ captures the interference increment caused by radiation amplitude quantization.
\end{theorem}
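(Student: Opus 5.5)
Our plan mirrors the argument we expect for Theorem~\ref{the_lower_bound_quant_rate}. We write the quantized pattern as $\hat{\bm{A}}=\bm{A}^*+\bm{\Delta}$ with $\bm{\Delta}=\diag\{\delta_1,\dots,\delta_M\}$ and $\delta_m=\hat a_m-a_m^*$. Following the standard treatment of uniform quantization, the errors $\delta_m$ are modeled as independent, zero-mean, and uniformly distributed on $[-2^{-(Q+1)},2^{-(Q+1)}]$. They are also taken to be independent of the optimal configuration. By (\ref{quantization_error_range}) this gives $\mathbb{E}[\delta_m^2]=\frac{1}{12}\,2^{-2Q}$ and $\mathbb{E}[\delta_m\delta_{m'}]=0$ for $m\neq m'$. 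The bound is then understood in the same sense as Theorem~\ref{the_lower_bound_quant_rate}: it holds for the expected (averaged) signal and interference powers.

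First, we expand the numerator. Since $\hat{\bm{A}}\bm{F}\bm{V}\bm{V}^H\bm{F}^H\hat{\bm{A}}^H$ is quadratic in $\hat{\bm{A}}$, the target term $\bm{o}^H\bm{H}_0\hat{\bm{A}}\bm{G}\hat{\bm{A}}^H\bm{H}_0^H\bm{o}$ is the squared norm of the row vector $\bm{o}^H\bm{H}_0\hat{\bm{A}}\bm{F}\bm{V}=\sum_m \hat a_m\,\bm{o}^H\bm{h}_m^{(0)}\bm{q}_m=\sum_m \hat a_m \bm{d}_m^{(0)}$. It therefore splits into three parts: $P^{(S)}$, a cross term $2\sum_m \delta_m\Re(\cdot)$ that is linear in $\delta$, and a quadratic error term $\sum_{m,m'}\delta_m\delta_{m'}\Re(\bm{d}_m^{(0)}\bm{d}_{m'}^{(0)H})$. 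Taking expectations removes the linear term. The quadratic term reduces to the nonnegative diagonal sum $\sum_m \frac{1}{12}\|\bm{d}_m^{(0)}\|^2 2^{-2Q}$. Dropping it gives the lower bound $P^{(S)}$ on the numerator.

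Second, we treat the clutter term the same way, using $\bm{d}_m$ in place of $\bm{d}_m^{(0)}$. The expected interference equals $I^{(S)}$ plus the quantization-induced increment $\sum_m\frac{1}{12}e_{m,m}2^{-2Q}$, where $e_{m,m}=\|\bm{d}_m\|^2$. The off-diagonal $e_{m,m'}$ terms vanish because the errors are uncorrelated. The noise term $\sigma^2\bm{o}^H\bm{o}$ does not depend on $\hat{\bm{A}}$. Combining the lower bound on the numerator with the exact expression for the denominator yields (\ref{lower_bound_sensing_SINR}).

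The main obstacle is making the step from expectations to a bound on $\hat\gamma_S$ rigorous, since the SINR is a ratio of random quadratic forms. We would handle this the way the paper presumably handles the rate in Appendix~\ref{app_lower_bound_quant_rate}. That is, we would define the quantized SINR through the averaged signal and interference powers, with the quantization error treated as additional noise, rather than as the expectation of the ratio. We would also keep the filter $\bm{o}$ fixed at its continuous-amplitude design, so that it is independent of $\bm{\Delta}$.
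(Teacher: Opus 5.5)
Your proposal is correct and follows essentially the same route as the paper. The paper proves this bound ``similarly'' to Appendix~\ref{app_lower_bound_quant_rate}: it writes $\hat a_m=a_m^*+\delta_m$, lets the first-order terms vanish in expectation, drops the nonnegative second-order term in the target power, and keeps the second-order clutter term, which with $\mathbb{E}[\delta_m^2]=2^{-2Q}/12$ and uncorrelated errors gives $\big(\sum_m \frac{1}{12}e_{m,m}\big)2^{-2Q}$. Your explicit modeling assumptions (independent uniform errors, $\bm{o}$ held fixed) and your remark that the bound concerns the ratio of averaged powers rather than the realized $\hat{\gamma}_S$ only make precise what the paper leaves implicit.
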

\begin{remark}
	The lower bounds for the quantized sum rate and quantized sensing SINR in (\ref{lower_bound_quant_rate}) and (\ref{lower_bound_sensing_SINR}) are achievable when the number $Q$ of quantization bits is sufficiently large.
\end{remark}

Based on the tractable lower bounds given in (\ref{lower_bound_quant_rate}) and (\ref{lower_bound_sensing_SINR}), we can derive an upper bound for the minimum required number of quantization bits, as shown in the following theorem.
\begin{theorem}
	\label{the_min_bit}
	The solution ${Q}^{min}$ to (\ref{problem_formulation}), i.e., the minimum required number of quantization bits, is upper bounded by
	\begin{align}
		\label{min_quant_bits}
		{Q}^{min}\le -\frac{1}{2}\log_2\zeta_0\triangleq \widetilde{Q}^{min}.
	\end{align}
	Here, $\zeta_0$ is the unique solution of the following equation with respect to $\zeta$\footnote{In general, $\zeta_0$ can be derived through numerical methods. However, for some special cases, such as a sensing-centric ISAC system with $\rho=0$, or a single-user communication-centric ISAC system with $\rho=1$, $\zeta_0$ can be given in closed form with respect to system parameters.}, i.e., 
	\begin{align}
	&\rho^{(C)}\sum_l \log_2\!\Bigg(1+\frac{P_l^{(C)}}{I_l^{(C)}+\sigma^2+
		\Big(\sum\limits_{{l'\le L+K, l'\neq l}}\sum\limits_m \frac{1}{12}c_{m,m}^{l,l'}\Big)\zeta}\Bigg)\notag\\
	&+\rho^{(S)}\frac{P^{(S)}}{I^{(S)}+\left(\sum_m \frac{1}{12}e_{m,m}\right)\zeta}
	=\epsilon_0\mathcal{U}^*.
	\end{align}
\end{theorem}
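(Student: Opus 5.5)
Combining Theorem~\ref{the_lower_bound_quant_rate} and Theorem~\ref{the_lower_bound_sensing_SINR} with (\ref{performace_quantized}) gives a closed-form lower bound on $\hat{\mathcal{U}}$ in terms of the single scalar $\zeta=2^{-2Q}$. Since $\rho^{(C)},\rho^{(S)}\ge 0$, we get
\begin{align}
	\hat{\mathcal{U}}\ge \Phi(\zeta)\triangleq{}&\rho^{(C)}\sum_l \log_2\!\Big(1+\frac{P_l^{(C)}}{I_l^{(C)}+\sigma^2+\alpha_l\zeta}\Big)\notag\\
	&+\rho^{(S)}\frac{P^{(S)}}{I^{(S)}+\beta\zeta},
\end{align}
where $\alpha_l=\sum_{l'\neq l}\sum_m \frac{1}{12}c_{m,m}^{l,l'}$ and $\beta=\sum_m\frac{1}{12}e_{m,m}$. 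In the sensing term we keep the noise contribution $\sigma^2\bm{o}^H\bm{o}$ either absorbed in $I^{(S)}$ or dropped, in the same way the statement does. Dropping it only enlarges the denominator relative to Theorem~\ref{the_lower_bound_sensing_SINR} if it is included there, so I will state the convention explicitly. The goal is then to show that any $Q$ with $\Phi(2^{-2Q})\ge\epsilon_0\mathcal{U}^*$ is feasible for (\ref{problem_formulation}), and to locate the threshold.

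First, I will establish monotonicity. We have $c_{m,m}^{l,l'}=|b_m^{l,l'}|^2\ge0$ and $e_{m,m}=\|\bm{d}_m\|^2\ge 0$, so $\alpha_l,\beta\ge0$. Hence each summand of $\Phi$ is nonincreasing in $\zeta$, and strictly decreasing whenever the relevant coefficient is positive. Thus $\Phi$ is continuous and strictly decreasing on $[0,1]$ outside degenerate cases.

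Second, I will argue existence and uniqueness of $\zeta_0$. At $\zeta=0$, $\Phi(0)$ is the continuous-amplitude performance up to the noise convention, i.e.\ $\Phi(0)\ge \mathcal{U}^*\ge\epsilon_0\mathcal{U}^*$ for $\epsilon_0\le 1$. As $\zeta\to\infty$, $\Phi(\zeta)\to0<\epsilon_0\mathcal{U}^*$ for $\epsilon_0>0$. The intermediate value theorem and strict monotonicity then give a unique root $\zeta_0$ of $\Phi(\zeta)=\epsilon_0\mathcal{U}^*$. Moreover, $\Phi(\zeta)\ge\epsilon_0\mathcal{U}^*$ holds if and only if $\zeta\le\zeta_0$.

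Third, I will translate this into a bound on $Q$. Take $Q\ge -\frac12\log_2\zeta_0$, meaning the smallest integer at least $\widetilde{Q}^{min}$, rounding up as the paper implicitly does. Then $2^{-2Q}\le\zeta_0$, so $\hat{\mathcal{U}}\ge\Phi(2^{-2Q})\ge\epsilon_0\mathcal{U}^*$, i.e.\ $\epsilon\ge\epsilon_0$ and $Q$ is feasible. Minimality of $Q^{min}$ gives $Q^{min}\le\lceil\widetilde{Q}^{min}\rceil$, which is the claim (\ref{min_quant_bits}) read with integer rounding.

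The main obstacle is the boundary behaviour at $\zeta=0$, namely checking that $\Phi(0)\ge\epsilon_0\mathcal{U}^*$. This requires the unquantized terms in the bounds to reproduce $\mathcal{U}^*$ exactly, which in turn requires the noise term in the sensing SINR to be treated consistently with the statement. I will handle this by noting that $P_l^{(C)}$, $I_l^{(C)}$, $P^{(S)}$, and $I^{(S)}$ are all evaluated at $\bm{A}^*$, so $\Phi(0)=\mathcal{U}^*$ under that convention. A secondary issue is the degenerate case $\alpha_l=\beta=0$, where uniqueness fails; then $\hat{\mathcal{U}}\ge\mathcal{U}^*$ for every $Q$, and the bound holds trivially with $Q^{min}=0$.
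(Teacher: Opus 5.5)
Your route matches the paper's. The paper gives no separate proof of this theorem; it just says the bound follows from Theorems~\ref{the_lower_bound_quant_rate} and~\ref{the_lower_bound_sensing_SINR}. Your steps are the natural way to make that precise: combine the two bounds with the nonnegative weights, set $\zeta=2^{-2Q}$, use $c_{m,m}^{l,l'}=|b_m^{l,l'}|^2\ge 0$ and $e_{m,m}=\|\bm{d}_m\|^2\ge 0$ for monotonicity, and obtain a unique threshold $\zeta_0$. Your observation that what actually follows is $Q^{min}\le\lceil\widetilde{Q}^{min}\rceil$ is correct and sharper than the paper's statement. The lower bound only certifies feasibility, so it cannot bound the integer $Q^{min}$ by a non-integer $\widetilde{Q}^{min}$.

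The step that fails as written is the handling of $\sigma^2\bm{o}^H\bm{o}$. You say that dropping it ``only enlarges the denominator''; it does the opposite. Removing a nonnegative term from the denominator of $P^{(S)}/(I^{(S)}+\sigma^2\bm{o}^H\bm{o}+\beta\zeta)$ makes the fraction larger. Your noise-free $\Phi$ then upper-bounds the combination of Theorems~\ref{the_lower_bound_quant_rate} and~\ref{the_lower_bound_sensing_SINR}, so $\hat{\mathcal{U}}\ge\Phi(\zeta)$ no longer follows. The consequence is concrete. Because the noise-free $\Phi$ dominates the correct one pointwise, its root $\zeta_0$ is larger, which gives a smaller $\widetilde{Q}^{min}$ that your argument cannot certify as feasible. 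For $\rho=0$ with $I^{(S)}\to 0$, the noise-free sensing term even blows up at $\zeta=0$.

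So you cannot keep both options. Read the statement's $I^{(S)}$ as $I^{(S)}+\sigma^2\bm{o}^H\bm{o}$, i.e.\ treat the omission in the displayed equation as a typo. Under that reading, $\hat{\mathcal{U}}\ge\Phi(\zeta)$ is exactly the two lower bounds combined. Moreover $\Phi(0)=\mathcal{U}^*$ holds with equality, since $\gamma^{(S)}$ at $\bm{A}^*$ equals $P^{(S)}/(I^{(S)}+\sigma^2\bm{o}^H\bm{o})$, and the rest of your argument goes through. Your last paragraph already asserts $\Phi(0)=\mathcal{U}^*$, which is true only under this convention; commit to it and delete the ``or dropped'' alternative.

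A smaller point concerns existence of $\zeta_0$, which needs $\lim_{\zeta\to\infty}\Phi(\zeta)<\epsilon_0\mathcal{U}^*$. This can fail when only some of the $\alpha_l$ or $\beta$ vanish, not just all of them. In that case every $Q$ is feasible and $Q^{min}=0$, but the root $\zeta_0$ claimed in the statement does not exist.
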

\begin{remark}
	\label{remark_achievable}
   The upper bound derived in Theorem~\ref{the_min_bit} is tight according to simulation results, and thus can be utilized to characterize required quantization bits and facilitate further system performance analysis, as shown below.
\end{remark}

\vspace{-.4cm}
\section{Performance Analysis}
\vspace{-.1cm}
\subsection{Comparison between Communication and Sensing Performance}
\vspace{-.1cm}

In this part, we analyze whether the communication or sensing functionality is more sensitive to limited radiation amplitudes. For fairness, we compare the average communication SINR in a communication-centric holographic ISAC system, i.e., ${\gamma}^{(C)}_{avg}(\rho=1)=\frac{1}{L}\sum_l{\gamma}_l(\rho=1)$, with the sensing SINR in a sensing-centric system, i.e., $\gamma^{(S)}(\rho=0)$.

According to (\ref{lower_bound_quant_rate}), the ratio between the average communication SINR $\hat{\gamma}^{(C)}_{avg}$ under limited radiation amplitudes and ${\gamma}^{(C)}_{avg}$ under continuous amplitudes can be approximated as
\begin{align}
	\label{ratio_SNR_comm}
	\left.\frac{\hat{\gamma}^{(C)}_{avg}}{{\gamma}^{(C)}_{avg}}\right|_{\rho=1}\!
	\approx\!
	\frac{I_{avg}^{(C)}+\sigma^2}{
		I_{avg}^{(C)}+\sigma^2+
		\Big(\!\sum\limits_{l=1}^{L}
		\sum\limits_{\substack{l'\le L+K\\ l'\ne l}}
		\sum\limits_{m}\!
		\frac{c_{m,m}^{l,l'}}{12L}\,\Big)2^{-2Q}
	},
\end{align}
where $I_{avg}^{(C)}=\frac{1}{L}\sum_l\left(\sum_{l'\le L+K, l' \neq l} |\bm{h}_l^T \bm{A}^* \bm{F} \bm{v}_{l'}|^2\right)$ denotes the average inter-user interference and sensing-induced interference among the $L$ users. Since only the communication performance is optimized, i.e., $\rho=1$, the interference can be effectively suppressed, leading to $I_{avg}^{(C)}\rightarrow 0$~\cite{HISAC_Zhang_2022,Zeng_IOS_2022}.

Furthermore, according to (\ref{lower_bound_sensing_SINR}), the ratio between the sensing SINR $\hat{\gamma}^{(S)}$ under limited radiation amplitudes and $\gamma^{(S)}$ under continuous radiation amplitudes can be expressed as
\begin{align}
	\label{ratio_SNR_sensing}
	\left.\frac{\hat{\gamma}^{(S)}}{\gamma^{(S)}}\right|_{\rho=0}=\frac{I^{(S)}+\sigma^2\bm{o}^H\bm{o}}{I^{(S)}+\sigma^2\bm{o}^H\bm{o}+(\sum_{m}\frac{1}{12}e_{m,m})(2^{-Q})^2},
\end{align}
where $I^{(S)}=\bm{o}^{H}(\sum_w\bm{H}_w)\bm{A}^*\bm{F}\bm{V}\bm{V}^H\bm{F}^H(\bm{A}^*)^H(\sum_w\bm{H}_w)^H\bm{o}$ represents the clutter-induced sensing interference. Since only sensing performance is considered here, i.e., $\rho=0$, the clutter interference is effectively mitigated, yielding $I^{(S)}\rightarrow 0$~\cite{Zeng_MC_2025}.

By comparing (\ref{ratio_SNR_comm}) and (\ref{ratio_SNR_sensing}), and noting that the interference terms approach zero, i.e., $I_{avg}^{(C)}\rightarrow 0$ and $I^{(S)}\rightarrow 0$, the relationship between $\frac{\hat{\gamma}^{(C)}_{avg}}{{\gamma}^{(C)}_{avg}}$ and $\frac{\hat{\gamma}^{(S)}}{\gamma^{(S)}}$ mainly depends on the relationship between $\frac{1}{L}\left(\sum_l\sum_{l'\le L+K, l' \neq l}  \sum_{m} \frac{1}{12} c_{m, m}^{l,l'}\right)$ and $\sum_{m}\frac{1}{12}e_{m,m}$. Substituting $c_{m, m'}^{l,l'}=\Re(b_{m}^{l,l'}(b_{m'}^{l,l'})^\dagger)$ with $b_{m}^{l,l'}=h_{l,m}\bm{f}_m\bm{v}_{l'}$, and $e_{m, m'}=\Re(\bm{d}_{m}\bm{d}_{m'}^H)$ with $\bm{d}_m=\bm{o}^{H}\bm{h}_m\bm{q}_m$, we observe that $\sum_{m}\frac{1}{12}e_{m,m} \ll \frac{1}{L}\left(\sum_l\sum_{l'\le L+K, l' \neq l}  \sum_{m} \frac{1}{12} c_{m, m}^{l,l'}\right)$, because the clutter return channel gain $|\bm{h}_m|$ is much weaker than the communication channel gain $|h_{l,m}|$, i.e., $|\bm{h}_m|\ll |h_{l,m}|$. Then, we obtain
\begin{align}
	\left.\frac{\hat{\gamma}^{(C)}_{avg}}{{\gamma}^{(C)}_{avg}}\right|_{\rho=1}< \left.\frac{\hat{\gamma}^{(S)}}{\gamma^{(S)}}\right|_{\rho=0}.
\end{align}
The above discussion leads to the following remark.

\begin{remark}
	\label{remark_sensitiveness}
	The sensing SINR of a sensing-centric holographic ISAC system is less sensitive to radiation amplitude quantization than the average communication SINR of a communication-centric holographic ISAC system.
\end{remark}

\vspace{-.4cm}
\subsection{Impact of Weight Factor $\rho$ on Minimum Required Bits}
In this part, we aim to analyze how the tradeoff between communication and sensing, characterized by weight factor $\rho$, influences the minimum required number of quantization bits $\widetilde{Q}^{min}$. We first consider two special cases, namely the communication-centric holographic ISAC system (i.e., $\rho=1$) and the sensing-centric holographic ISAC system (i.e., $\rho=0$). We denote the corresponding minimum required quantization bits by $\widetilde{Q}^{min}_1$ and $\widetilde{Q}^{min}_0$, respectively. According to Theorem~\ref{min_quant_bits}, the minimum required bits $\widetilde{Q}^{min}_1$ for the communication-centric holographic ISAC system satisfy
\begin{align}
	\label{min_bit_com}
	\frac{\sum_l\log_2(1+\hat{\gamma}_l(Q=\widetilde{Q}^{min}_1))}{\sum_l\log_2(1+{\gamma}_l)}=\epsilon_0.
\end{align}
In addition, the minimum required bits $\widetilde{Q}^{min}_0$ for the sensing-centric holographic ISAC system satisfy
\begin{align}
	\label{min_bit_sensing}
	\frac{\hat{\gamma}^{(S)}(Q=\widetilde{Q}^{min}_0)}{\gamma^{(S)}}=\epsilon_0.
\end{align}

Assume that the performance threshold is sufficiently high, i.e., $\epsilon_0$ is close to $1$. Then we have $\frac{\sum_l\log_2(1+\hat{\gamma}_l)}{\sum_l\log_2(1+{\gamma}_l)}\approx \frac{\hat{\gamma}^{(C)}_{avg}}{{\gamma}^{(C)}_{avg}}$. Based on Remark~\ref{remark_sensitiveness}, (\ref{min_bit_com}), and (\ref{min_bit_sensing}), we have the following.

\begin{remark}
	\label{remark_bits_cmp}
	When the performance threshold is sufficiently high, the sensing-centric holographic ISAC system requires fewer quantization bits than the communication-centric holographic ISAC system to achieve the performance threshold, i.e., $\widetilde{Q}^{min}_0<\widetilde{Q}^{min}_1$.
\end{remark}

\begin{remark}
	\label{remark_impact_lambda}
	When the performance threshold $\epsilon_0$ is sufficiently high, the minimum required number of bits in the holographic ISAC system increases as the weight factor grows, i.e., as the system places more emphasis on communication performance.
\end{remark}

\vspace{-.4cm}
\section{Simulation Results}
\label{sec_simulation}
\vspace{-.1cm}
In this section, we present simulation results to evaluate the performance of the proposed RHS-aided system. Following 3GPP standards and relevant literature~\cite{Rang_waveform_2022,Zhao_Sec_arxiv,Zeng_MC_2025,Zeng_Trajectory_2021}, the carrier frequency is set to $f = 30$~GHz, with a corresponding wavelength of $\lambda = 0.01$~m. The total transmit power at the BS is $P = 43$~dBm, and the AWGN power is $\sigma^2 = -96$~dBm. The BS is equipped with an RHS featuring $T = 6$ feed ports, each connected to an RF chain, with an internal reference wave attenuation factor $\alpha = 1$, an element spacing of $0.23\lambda$, and the number of RHS elements $M=72$. The reference wave model is the same as that in~\cite{Peng_Satellite_2025}, with propagation attenuation factor within the waveguide $\alpha=1$. For receiving echo signals, a MIMO antenna array with $N = 5$ elements and an inter-element spacing of $\lambda/2$ is utilized at the BS. For radar sensing purposes, the number of radar waveforms is set to $K = 1$. The number of clutters is set as $W=3$.

We assume that communication users, sensing targets, and clutter sources are randomly and uniformly distributed within specific spatial ranges. Specifically, the distance for communication users is uniformly distributed in $[100, 200]$~m, while targets and clutter are located within distances of $[30, 50]$~m and $[50, 70]$~m, respectively. For all entities, the elevation angles $\theta$ and azimuth angles $\phi$ are uniformly distributed in $[\pi/3, 2\pi/3]$ and $[-\pi/3, \pi/3]$, respectively. Both the communication and sensing echo channels are modeled as free-space line-of-sight~(LoS) propagation paths.


\begin{figure*}[!t]
	\centering
	\subfigure[]{
		\begin{minipage}[b]{0.27\textwidth}
			\centering
			\includegraphics[width=\textwidth]{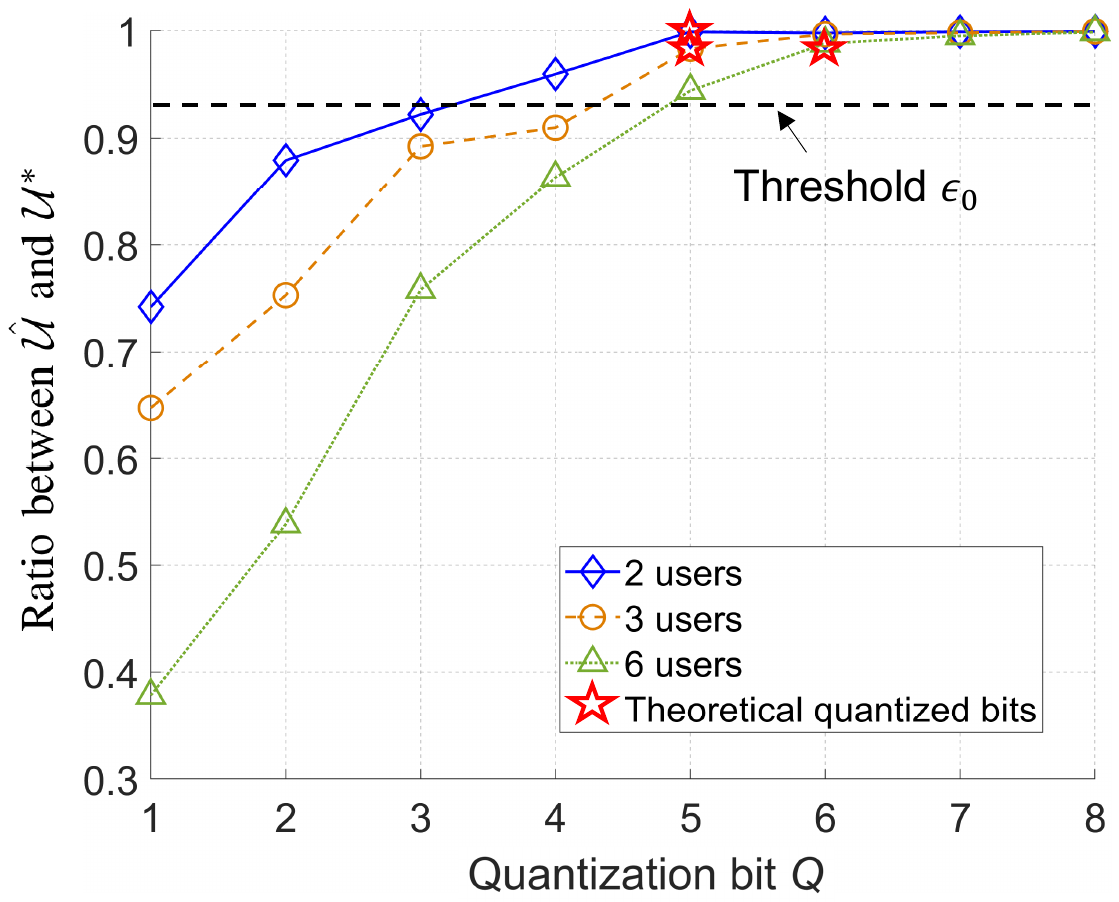}
			\vspace{-.6cm}
			\label{fig_result_L_all_v4}
	\end{minipage}}
	\subfigure[]{
		\begin{minipage}[b]{0.30\textwidth}
			\centering
			\includegraphics[width=\textwidth]{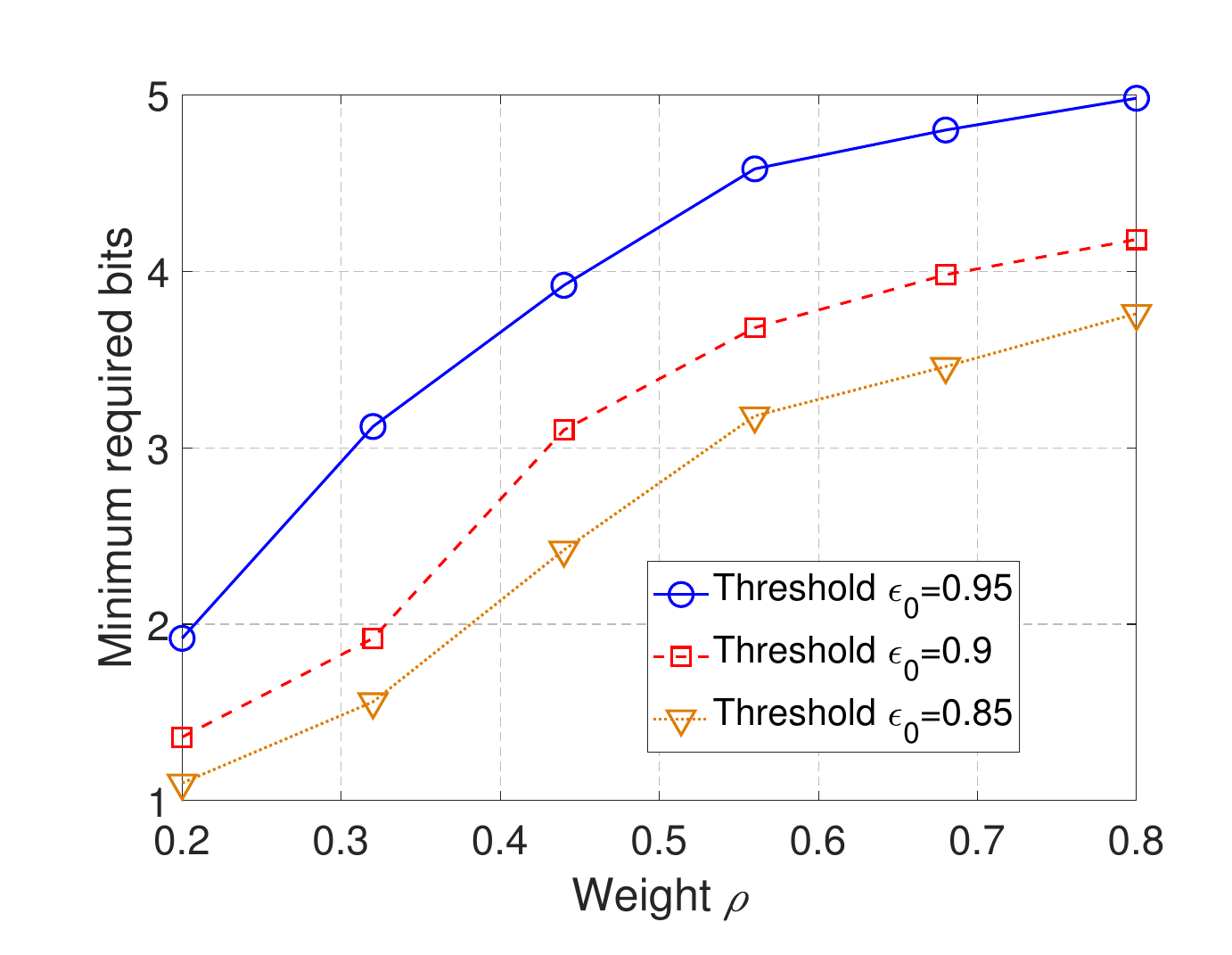}
			\vspace{-.6cm}
			\label{fig_result_quant_bit_vs_weight}
	\end{minipage}}
	\subfigure[]{
		\begin{minipage}[b]{0.29\textwidth}
			\centering
			\includegraphics[width=\textwidth]{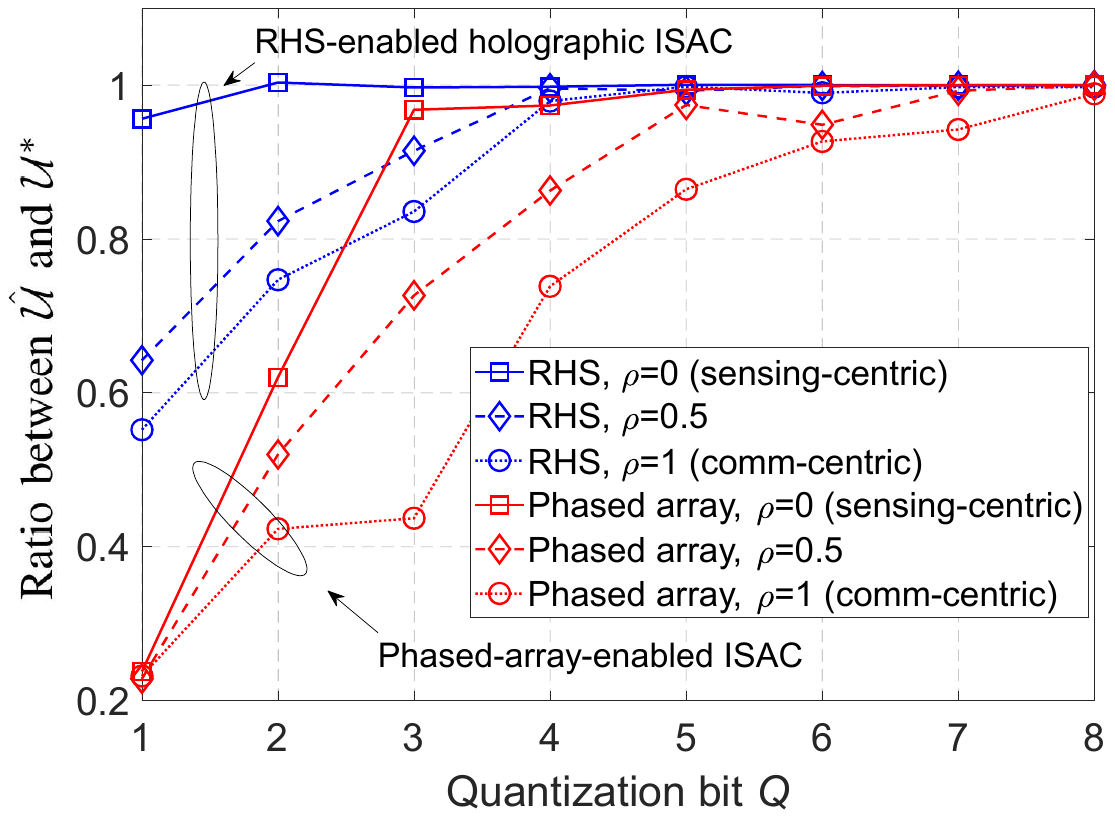}
			\vspace{-.6cm}
			\label{fig_result_cmp_RHS_vs_PA}
	\end{minipage}}
	\vspace{-.2cm}
	\caption{Numerical results: (a) Performance degradation $\epsilon=\hat{\mathcal{U}}/\mathcal{U}^*$ due to limited radiation amplitudes versus the number $Q$ of quantization bits, with $\rho=0.3$; (b) The minimum required quantization bits versus weight factor $\rho$, with $L=3$ communication users; (c) The comparison of quantization effects on the RHS-aided holographic ISAC systems and conventional ISAC systems based on phased arrays, with $L=2$ communication users and $W=10$ clutters. The hardware cost of the RHS is the same as that of the phased array~\cite{Zeng_MC_2025}.}
	\label{fig_overall_results}
	\vspace{-0.6cm}
\end{figure*}

Fig.~\ref{fig_result_L_all_v4} shows system performance ratio $\epsilon = \hat{\mathcal{U}}/\mathcal{U}^*$ versus quantization bits $Q$. We can find that the theoretical minimum quantization bits derived in (\ref{min_quant_bits}) are close to the realistic value of the minimum quantization bits, validating the accuracy of the analysis. Fig.~\ref{fig_result_L_all_v4} also indicates that systems with more communications users suffer more from quantization, thus requiring more bits to maintain performance.


Fig.~\ref{fig_result_quant_bit_vs_weight} illustrates the required minimum quantization bits versus the weight factor $\rho$, where the threshold is set high, i.e., $\epsilon_0\ge 0.85$. It can be observed that the required minimum quantization bits increase with weight factor $\rho$. This indicates that the holographic ISAC system requires more quantization bits when it pays more attention to the communication performance, which aligns with Remark~\ref{remark_impact_lambda}.



Fig.~\ref{fig_result_cmp_RHS_vs_PA} compares the quantization robustness of the RHS-enabled holographic ISAC system with that of a phased-array-based ISAC system. It is observed that the phased-array system is more sensitive to quantization and requires more bits to achieve the same performance threshold. This is because phased arrays rely on precise phase control for beamforming, and performance is inherently more sensitive to phase errors than to amplitude errors. Moreover, sensing-centric designs ($\rho=0$) require fewer quantization bits than communication-centric designs ($\rho=1$), consistent with Remark~\ref{remark_bits_cmp}.

\vspace{-0.3cm}
\section{Conclusion}
\vspace{-0.1cm}
This paper has investigated a holographic ISAC system enabled by an RHS with discrete amplitude control. Closed-form lower bounds for both communication rate and sensing SINR have been derived to establish a tight upper bound on the minimum required quantization bits. Based on this, we have further explored how the communication-sensing performance tradeoff impacts the quantization bit. Through theoretical analysis and numerical results, we can conclude that: 1) The sensing SINR for a sensing-centric holographic ISAC system is less sensitive to quantization effect than the average communication SINR for a communication-centric holographic ISAC system; 2) Under high performance threshold $\epsilon_0$, the minimum required bits tend to increase when the holographic ISAC system focuses more on communication performance; 3) Amplitude-control-based holographic ISAC systems are more robust to quantization effect than conventional ISAC systems based on phased arrays.

\vspace{-0.1cm}
\begin{appendices}
	\vspace{-.3cm}
	\section{Proof of Theorem~\ref{the_lower_bound_quant_rate}}
	\label{app_lower_bound_quant_rate}
	Let $\hat{a}_m=a_m^*+\delta_m$, where $\delta_m$ is the zero-mean quantization error. Substituting this into the SINR expression, both the desired signal and interference powers can be written as their continuous-amplitude counterparts plus perturbation terms involving $\delta_m$. For the desired signal, expanding $|\bm{h}_l^T\hat{\bm{A}}\bm{F}\bm{v}_l|^2$ yields a first-order term linear in $\delta_m$ and a second-order term quadratic in $\delta_m$. The first-order term vanishes in expectation, and removing the non-negative second-order term gives a lower bound on the desired signal. For the interference, a similar expansion shows that the first-order error terms vanish in expectation, while the second-order terms contribute an additional interference component $\big(\sum_{l'\neq l}\sum_m \frac{1}{12}c_{m,m}^{l,l'}\big)2^{-2Q}$.
\end{appendices}


\vspace{-.3cm}
\begin{thebibliography}{11}
	\vspace{-.2cm}
	\bibitem{Yang_6G_2019}
	P. Yang, Y. Xiao, M. Xiao, and S. Li, ``6G Wireless Communications: Vision and Potential Techniques," \emph{IEEE Netw.}, vol.~33, no.~4, pp.~70--75, Jul. 2019.
	
	\bibitem{Wang_Gen_2024}
	J. Wang, H. Du, D. Niyato, Z. Xiong, J. Kang, B. Ai, Z. Han, and D. I. Kim, ``Generative Artificial Intelligence Assisted Wireless Sensing: Human Flow Detection in Practical Communication Environments," \emph{IEEE J. Sel. Areas Commun.}, vol.~42, no.~10, pp.~2737--2753, Jun. 2024.
	
	\bibitem{HISAC_Zhang_2022}
	H. Zhang, H. Zhang, B. Di, M. D. Renzo, Z. Han, H. V. Poor, and L. Song, ``Holographic Integrated Sensing and Communication," \emph{IEEE J. Sel. Areas Commun.}, vol.~40, no.~7, pp.~2114--2130, Jul. 2022.
	
	\bibitem{Hu_how_many_2022}
	X. Hu, R. Deng, B. Di, H. Zhang, and L. Song, ``Holographic Beamforming for Ultra Massive MIMO With Limited Radiation Amplitudes: How Many Quantized Bits Do We Need?" \emph{IEEE Commun. Lett.}, vol.~26, no.~6, pp.~1403--1407, Jun. 2022.
	
	\bibitem{Zeng_MC_2025}
	S. Zeng, H. Zhang, B. Di, H. Zhang, Z. Shao, Z. Han, H. Vincent Poor, and L. Song, ``Holographic Beamforming for Integrated Sensing and Communication With Mutual Coupling Effects," \emph{IEEE J. Sel. Areas Commun.}, early access, 2025.
	
	\bibitem{HISAC_trx_architecture}
	J. Yu, H. Zhang, B. Di, and L. Song, ``Joint Transmit and Receive Beamforming for Holographic ISAC With Leakage Power Constraint," in \emph{Proc. IEEE Veh. Technol. Conf. (VTC-Fall)}, Chengdu, China, Oct. 2025.
	
	\bibitem{covert_gao_2025}
	W. Gao, Z. Xie, Y. Wang, Y. Yao, H. Jiang, and F. Shu, ``Covert Beamforming Design for Holographic Integrated Sensing and Communication With Imperfect CSI," \emph{IEEE Internet Things J.}, early access, 2025.
	
	\bibitem{Peng_Satellite_2025}
	Y. Peng, S. Zeng, S. Fu, and B. Di, ``On the Coverage Performance of Reconfigurable Holographic Surfaces-Aided LEO Satellite Communication Networks," \emph{IEEE Trans. Veh. Technol.}, early access.
	
	\bibitem{Qin_CamEdit_2025}
	X. Qin, Z. Wang, F. Li, H. Chen, R. Pei, W. Li, and X. Cao, ``CamEdit: Continuous Camera Parameter Control for Photorealistic Image Editing," in \emph{Proc. 39th Annu. Conf. Neural Inf. Process. Syst. (NeurIPS)}, San Diego, CA, USA, 2025.
	
	
	\bibitem{Rang_waveform_2022}
	R. Liu, M. Li, Q. Liu, and A. Lee Swindlehurst, ``Joint Waveform and Filter Designs for STAP-SLP-Based MIMO-DFRC Systems," \emph{IEEE J. Sel. Areas Commun.}, vol.~40, no.~6, pp.~1918--1931, Jun. 2022.
	
	\bibitem{Quan_Siamese_2024}
	Y. Quan, X. Qin, T. Pang, and H. Ji, ``Siamese Cooperative Learning for Unsupervised Image Reconstruction From Incomplete Measurements," \emph{IEEE Trans. Pattern Anal. Mach. Intell.}, vol.~46, no.~7, pp.~4866--4879, 2024.
	
	
	\bibitem{Wang_NOMA_2022}
	Z. Wang, Y. Liu, X. Mu, Z. Ding, and O. A. Dobre, ``NOMA Empowered Integrated Sensing and Communication," \emph{IEEE Commun. Lett.}, vol.~26, no.~3, pp.~677--681, Mar. 2022.
	
	\bibitem{Zeng_IOS_2022}
	S. Zeng, H. Zhang, B. Di, Y. Liu, M. Di Renzo, Z. Han, H. V. Poor, and L. Song, ``Intelligent Omni-Surfaces: Reflection-Refraction Circuit Model, Full-Dimensional Beamforming, and System Implementation," \emph{IEEE Trans. Commun.}, vol.~70, no.~11, pp.~7711--7727, Nov. 2022.
	
	\bibitem{Zhao_Sec_arxiv}
	C. Zhao, J. Wang, R. Zhang, D. Niyato, H. Du, Z. Xiong, D. I. Kim, and P. Zhang, ``SecDiff: Diffusion-Aided Secure Deep Joint Source-Channel Coding Against Adversarial Attacks," arXiv:2511.01466, 2025.
	
	\bibitem{Zeng_Trajectory_2021}
	A. Hassanien, M. G. Amin, E. Aboutanios, and B. Himed, ``DualFunction Radar Communication Systems: A Solution to the Spectrum
	Congestion Problem," \emph{IEEE Signal Process. Mag.}, vol.~36, no.~5, pp.~115-126, Sep. 2019.
	
	
%
%
%
%
%
%
%
%
%
%
%
%
%
%
%
%

\end{thebibliography}
\end{document}